\newtheorem{theorem}{Theorem}[section]
\newtheorem*{theorem*}{Theorem}
\newtheorem{lemma}[theorem]{Lemma}
\newtheorem*{lemma*}{Lemma}
\newtheorem*{proposition*}{Proposition}
\newtheorem{corollary}[theorem]{Corollary}
\newtheorem*{corollary*}{Corollary}
\renewenvironment{proof}[1][Proof]{\begin{trivlist}
\item[\hskip \labelsep {\bfseries #1}]}{\hfill{$\square$}\end{trivlist}}
\newenvironment{definition}[1][Definition]{\begin{trivlist}
\item[\hskip \labelsep {\bfseries #1}]}{\end{trivlist}}
\newenvironment{notation}[1][Notation]{\begin{trivlist}
\item[\hskip \labelsep {\bfseries #1}]}{\end{trivlist}}
\newenvironment{remark}[1][Remark]{\begin{trivlist}
\item[\hskip \labelsep {\bfseries #1}]}{\end{trivlist}}
\def\BibTeX{{\rm B\kern-.05em{\sc i\kern-.025em b}\kern-.08em
    T\kern-.1667em\lower.7ex\hbox{E}\kern-.125emX}}
\title{Variance Analysis for Monte Carlo Integration:\\ A Representation-Theoretic Perspective}
\author{
Michael Kazhdan$^1$
\hspace*{0.5cm}
Gurprit Singh$^{2}$
\hspace*{0.5cm}
Adrien Pilleboue$^{2}$
\hspace*{0.5cm}
David Coeurjolly$^3$
\hspace*{0.5cm}
Victor Ostromoukhov$^{2,3}$
\\
	\hspace*{0.5cm}
	$^1$Johns Hopkins University
	\hspace*{0.5cm}
	$^2$Universit\'e Lyon 1
	\hspace*{0.5cm}
	$^3$CNRS/LIRIS UMR 5205
}
\renewcommand{\P}{{\mathcal P}}
\newcommand{\C}{{\mathbb C}}
\newcommand{\R}{{\mathbb R}}
\newcommand{\Z}{{\mathbb Z}}
\renewcommand{\S}{{\mathbf S}}
\renewcommand{\v}{{\mathbf v}}
\renewcommand{\d}[1]{{\ \mathsf{d}#1}}
\newcommand{\Mean}[2]{\mathsf{E}_{#2}\!\left[#1\right]}
\newcommand{\Var}[2]{\mathsf{Var}_{#2}\!\left(#1\right)}
\newcommand{\G}{{\Gamma}}
\newcommand{\g}{{\gamma}}
\newcommand{\Id}{\hbox{Id.}}
\newcommand{\minusqquad}{\mkern-18mu}
\newcommand{\Int}[1]{\underset{{#1}}{\mathop\int}}
\newcommand{\ignore}[1]{}
\newcommand\CommentMK[1]{{\color{violet!80}\textbf{[MK]} #1}}
\begin{document}
\maketitle


\section{Overview}
In this report, we revisit the work of Pilleboue et al.~\shortcite{Pilleboue:SIGGRAPH:2015}, providing a representation-theoretic derivation of the closed-form expression for the expected value and variance in homogeneous Monte Carlo integration. We show that the results obtained for the variance estimation of Monte Carlo integration on the torus, the sphere, and Euclidean space can be formulated as specific instances of a more general theory. We review the related representation theory and show how it can be used to derive a closed-form solution.

\section{Problem Statmement}
\label{sec:problem_statement}
We begin by reviewing some basic concepts from Monte Carlo integration. Next, we present a formal definition of homogeneity. And finally, we formulate the generalized problem statement.

\subsection*{Monte Carlo Integration}

\begin{definition}
Given a domain $\Omega$ and given two (complex-valued) functions $F,G:\Omega\rightarrow\C$, the \emph{dot-product} of the functions is the integral of the product of $F$ with the complex conjugate of $G$:
$$\langle F, G\rangle = \Int{\Omega} F(x)\cdot \overline{G(x)}\d{x}.$$
\end{definition}

\begin{definition}
Given a domain $\Omega$ and given $S=\{s_1,\cdots,s_N\}\in\Omega^N$, the \emph{Monte Carlo estimate} of the integral of a function $F:\Omega\rightarrow\C$ is obtained by averaging the values of $F$ at the $N$ positions:
$$\hbox{MC}(F,S) := \frac1N \sum_{i=1}^N F(s_i).$$
Treating $S$ as the average of delta functions, centered at $\{s_i\}$:
$$S(x)\equiv\frac1N\sum_{i=1}^N \delta_{s_i}(x),$$
the Monte Carlo estimate becomes the dot-product of $F$ and $S$:
$$\hbox{MC}(F,S)=\left\langle F , S\right\rangle.$$
\end{definition}

\begin{definition}
Given a domain $\Omega$ and a positive integer $N$, a \emph{sampling pattern} is a function $\P:\Omega^N\rightarrow\R$ over the set of all $N$-tuples of points in $\Omega$, satisfying:
$$\Int{\Omega^N} \P(S)\ \mathsf{d}S=|\Omega|\qquad\hbox{and}\qquad \P(S)\geq0,\quad\forall\ S\in\Omega^N,$$
where $|\Omega|$ is the measure of $\Omega$.
\end{definition}

\begin{definition}
Given a sampling pattern $\P$ and a function $F:\Omega\rightarrow\C$, the \emph{expected value} of the integral of $F$ and the \emph{variance} in the estimate of the integral are given by:
\begin{align*}
\Mean{\langle F,\S\rangle}{\P} &:= \Int{\Omega^N} \langle F,S\rangle\cdot \P(S)\d{S}\\
\Var{\langle F,\S\rangle}{\P} &:= \Mean{\left\|\langle F,\S\rangle\right\|^2}{\P} - \left\|\Mean{\langle F,\S\rangle}{\P}\right\|^2.
\end{align*}
\end{definition}

\subsection*{Homogeneity}
In order to make the problem of estimating the variance in Monte Carlo integration tractable, we restrict ourselves to sampling patterns that are homogeneous. To make this formal, we first define a notion of a group action.

\begin{definition}
We say that a group, $\G$, \emph{acts on} $\Omega$ if each element $\g\in\G$ defines a maps $\g:\Omega\rightarrow\Omega$ the preserves the measure on $\Omega$.
\end{definition}

\begin{notation}
Given a group action of $\G$ on $\Omega$, given $F:\Omega\rightarrow\C$, and given $\g\in\G$, we denote by $\g(F):\Omega\rightarrow\C$ the function obtained by applying the inverse of $\g$ to the argument of $F$:
$$[\g(F)](x) := F\left(\g^{-1}(x)\right).$$
Here, inversion is required so that $(\g\circ \tilde{\g})(F) = \g(\tilde{\g}(F))$ for all $\g,\tilde{\g}\in\G$.
\end{notation}

\begin{remark}
Since the map $\g:\Omega\rightarrow\Omega$ preserves the measure, the associated map on the space of functions is unitary. That is, for any functions $F,G:\Omega\rightarrow\C$ we have:
$$\left\langle F , G \right\rangle = \left\langle \g(F) , \g(G)\right\rangle,\qquad\forall\ \g\in\G.$$
\end{remark}

\begin{definition}
Given a group action of $\G$ on $\Omega$, we say that a sampling pattern $\P:\Omega^N\rightarrow\R$ is \emph{homogeneous with respect to $\G$} if the probability of choosing a sampling pattern is the same as the probability of choosing any of its transformations by the group elements:
$$\P(S) = \P\left(\g(S)\right),\qquad\forall\ \g\in\G.$$
(Note that we can either think of $\g(S)$ as the sampling pattern obtained by transforming the sample positions, $s_i\mapsto\g(s_i)$, or as the transform of the sum of delta functions -- the two definitions are consistent.) 
\end{definition}

\begin{remark}
If the group $\G$ is compact, one can always transform an initial sampling pattern $\P_0$ into a homogeneous sampling pattern $\P$ by averaging over the group elements:
$$\P(S):=\frac{1}{|\G|}\Int{\G}\P_0\left(\g(s)\right)\d{\g}.$$
\end{remark}

\ignore
{
\begin{remark}
An immediate implication of the homogeneity of a sampling pattern is that the estimate of the integral of $F$ is the same as the estimate of the integral of $\g(F)$, for all $\g\in\G$.

To see this we note that since the group action preserves the measure on $\Omega$, integrating over $\Omega$ is the same as integrating over $\tilde{\g}(\Omega)$, for any $\tilde{\g}\in\G$:
$$\Mean{\langle F,\S\rangle}{\P}= \Int{\tilde{\g}(\Omega)^N} \langle F,S\rangle\cdot \P(S)\d{S}\qquad\forall\ \tilde{\g}\in\G.$$
Averaging the above over all $\tilde{\g}\in\G$ we get:
\begin{align*}
\Mean{\langle F,\S\rangle}{\P}
&= \frac{1}{|\G|}\Int{\G}\Int{\tilde{\g}(\Omega)^N} \langle F,S\rangle\cdot \P(S)\d{S}\d{\tilde{\g}}\\
&= \frac{1}{|\G|}\Int{\G}\Int{\Omega^N} \langle F,\tilde{\g}(S)\rangle\cdot \P(S)\d{S}\d{\tilde{\g}}.
\end{align*}
Since the group action is unitary, applying $\g$ to both arguments of the inner-product does not change the inner-product:
$$\Mean{\langle F,\S\rangle}{\P}= \frac{1}{|\G|}\Int{\G}\Int{\Omega^N} \langle \g(F),(\g\circ\tilde{\g})(S)\rangle\cdot \P(S)\d{S}\d{\tilde{\g}}.$$
And finally, using the homogeneity of the sampling pattern and the fact that integrating over $\Omega$ is the same as integrating over $(\g\circ\tilde{\g})(\Omega)$, we have:
\begin{align*}
\Mean{\langle F,\S\rangle}{\P}
&= \frac{1}{|\G|}\Int{\G}\Int{\Omega^N} \langle \g(F),(\g\circ\tilde{\g})(S)\rangle\cdot \P((\g\circ\tilde{\g})(S))\d{S}\d{\tilde{\g}}\\
&= \frac{1}{|\G|}\Int{\G}\Int{(\g\circ\tilde{\g})(\Omega)^N} \langle \g(F),S\rangle\cdot \P(S)\d{S}\d{\tilde{\g}}\\
&= \frac{1}{|\G|}\Int{\G}\Int{\Omega^N} \langle \g(F),S\rangle\cdot \P(S)\d{S}\d{\tilde{\g}}\\
&= \Mean{\langle\g(F),\S\rangle}{\P}.
\end{align*}
\end{remark}
}

\begin{remark}
It is common to use the term \emph{homogeneous} to refer to invariance to translation and the term \emph{isotropic} to refer to invariance to rotations. As the general theory we present will not distinguish between the group actions, we will use the term \emph{homogeneous} throughout.
\end{remark}

\subsection*{Problem Statement}
Thinking of the space of functions as a complex inner-product space, thinking of a sampling pattern as a real-valued function on this vector space, and using the fact that computing the Monte Carlo integral amounts to taking the dot-product of the integrand with the average of delta functions, we can view the problem of estimating variance in Monte Carlo integration as an instance of the following, more general, algebraic problem:

\emph
{
Assume we are given a complex inner-product space $(V,\langle\cdot,\cdot\rangle)$, a group $\G$ acting on $V$, and a homogeneous function $\P:V\rightarrow\R$. Then, given $w\in V$, compute the expected value and variance of the dot-product of $w$ with the vectors $\v\in V$:
\begin{align*}
\Mean{\langle w ,\v\rangle}{\P} &= \Int{V}\langle w , v\rangle \cdot \P(v)\d{v}\\
\Var{\langle w ,\v\rangle}{\P} &= \Mean{\left\|\langle w , \v\rangle \right\|^2}{\P}-\left\|\Mean{\langle w , \v\rangle }{\P}\right\|^2.
\end{align*}
}

The advantage of formulating the problem in this manner is that it makes it easier to leverage representation theory to find a solution. To this end, we review some basic concepts from representation theory in the next section, as well as derive two lemmas describing how the average of the inner-products of vectors behave as we transform one of the arguments by the elements of the group. Using these, we present our closed-form expression for the expected value and variance, given in terms of the Fourier coefficients of the integrand $F$ and the sampling patterns $\S$, in Section~\ref{sec:variance_estimation}.


\section{Representation Theory}
\label{sec:representation_theory}
The study of how the Fourier coefficients of a signal change as it is transformed by the elements of a group is best expressed in the language of representation theory. We review some basic concepts from this theory, before deriving the lemmas that lead to a closed-form expression for the expected value and variance of the Monte Carlo integral.

In what follows, we will assume a compact (closed and bounded) Lie group $\G$.

\begin{definition}
Given a complex inner-product space $(V,\langle\cdot,\cdot\rangle)$, we say that the $(\rho,V)$ is a \emph{representation of $\G$} if $\rho$ is a group homomorphism from $\G$ into the group of unitary transformations on $V$. That is:
$$\rho(\g\circ\tilde{\g})=\rho(\g)\circ\rho(\tilde{\g}),\qquad\forall\ \g,\tilde{\g}\in\G.$$
\end{definition}

\begin{notation}
Given a representation $(\rho,V)$, a group element $\g\in\G$, and a vector $v\in V$, we will write:
$$\g(v):=\rho(\g)(v).$$
\end{notation}

\begin{definition}
Given a vector space $V$, the \emph{trivial representation} is the map $\rho$ sending every group element to the identity:
$$\rho(\g)=\Id,\qquad\forall\ \g\in\G.$$
\end{definition}

\begin{definition}
Given a representation $(\rho,V)$ and a subspace $W\subset V$, we say that $W$ is a {\em sub-representation} if $\g(w)\in W$ for all $w\in W$ and all $\g\in\G$.
\end{definition}

\begin{definition}
We say that $(\rho,V)$ is an {\em irreducible representation} if the only sub-representations are $W=\{0\}$ and $W=V$.
\end{definition} 

\begin{remark}
Since any subspace of a trivial representation is a sub-representation (as the identity maps all vectors to themselves), a trivial representation is irreducible if and only if it is one-dimensional.
\end{remark}

Given a representation $(\rho,V)$, Maschke's Theorem~\cite{Serre:1977,Fulton:1991} tells us that we can decompose $V$ as the direct sum of finite-dimensional, irreducible representations:
$$V=\bigoplus_{\lambda\in\Lambda}V^\lambda$$
with $V^\lambda$ and $V^{\tilde{\lambda}}$ perpendicular whenever $\lambda\neq\tilde{\lambda}$.

Choosing an orthonormal basis $\{b^1_\lambda,\cdots,b^{n_\lambda}_\lambda\}$ for each $V^\lambda$ allows us to define a Fourier transform:

\begin{definition}
Given a vector $v\in V$, and indices $\lambda\in\Lambda$ and $m\in[1,\cdots,n_\lambda]$, the \emph{$(\lambda,m)$-th Fourier coefficient} of $v$, denoted $\widehat{v}_\lambda^m$, is the coefficient of $v$ corresponding to the basis vector $b_\lambda^m$:
$$\widehat{v}_\lambda^m:=\langle v,b_\lambda^m\rangle.$$
\end{definition}

\begin{remark}
Since the basis defining the Fourier coefficients is orthonormal, we can write the inner product of two functions $v,w\in V$ in terms of these coefficients as:
$$\langle v, w\rangle \equiv \sum_{\lambda\in\Lambda}\sum_{m=1}^{n_\lambda} \widehat{v}_\lambda^m\cdot\overline{\widehat{w}_\lambda^m}.$$
\end{remark}

\begin{lemma}
\label{l:lemma1}
Given an irreducible representation $(\rho,V)$ of a group $\G$, for any $x,y,v,w\in V$, we have:
$$\frac{1}{|\G|}\int_\G\langle\g(x),y\rangle \cdot\overline{\langle\g(v),w\rangle}\d{\g} = 
\frac{1}{\hbox{dim}(V)}\cdot\langle x , v\rangle\cdot \overline{\langle y , w\rangle}.$$
\end{lemma}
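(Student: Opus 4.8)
The plan is to recognize the left-hand side as the matrix coefficient $\langle\widetilde{A}\,x,v\rangle$ of an \emph{averaged} operator obtained by conjugating a fixed rank-one operator by the group, and then to pin that operator down with Schur's Lemma. Concretely, I would fix $y$ and $w$ and define the rank-one operator $A\colon V\to V$ by $A(u):=\langle u,y\rangle\, w$. Using conjugate symmetry, $\overline{\langle\g(v),w\rangle}=\langle w,\g(v)\rangle$, so the integrand becomes $\langle\g(x),y\rangle\cdot\langle w,\g(v)\rangle=\langle A(\g(x)),\g(v)\rangle$. Since $\rho(\g)$ is unitary I may move the group element across the inner product, i.e. $\langle A\,\rho(\g)x,\rho(\g)v\rangle=\langle \rho(\g)^{-1}A\,\rho(\g)\,x,\,v\rangle$. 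Hence the entire left-hand side equals $\langle\widetilde{A}\,x,v\rangle$ with $\widetilde{A}:=\frac{1}{|\G|}\int_\G \rho(\g)^{-1}A\,\rho(\g)\d{\g}$.

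Next I would show that $\widetilde{A}$ is an intertwiner, i.e. it commutes with every $\rho(\tilde{\g})$. Conjugating gives $\rho(\tilde{\g})^{-1}\widetilde{A}\,\rho(\tilde{\g})=\frac{1}{|\G|}\int_\G \rho(\g\tilde{\g})^{-1}A\,\rho(\g\tilde{\g})\d{\g}$, and the right-invariance of the (normalized) Haar measure, valid since $\G$ is compact, lets me absorb $\tilde{\g}$ by the substitution $\g\mapsto\g\tilde{\g}$, yielding $\rho(\tilde{\g})^{-1}\widetilde{A}\,\rho(\tilde{\g})=\widetilde{A}$, equivalently $\widetilde{A}\,\rho(\tilde{\g})=\rho(\tilde{\g})\,\widetilde{A}$. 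Because $(\rho,V)$ is irreducible and $V$ is a complex, finite-dimensional space (finite-dimensionality coming from the Maschke decomposition quoted earlier), Schur's Lemma forces $\widetilde{A}=c\cdot\Id$ for a single scalar $c\in\C$.

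Finally I would evaluate $c$ by taking the trace. Trace is invariant under conjugation, so $\operatorname{tr}\widetilde{A}=\frac{1}{|\G|}\int_\G\operatorname{tr}\bigl(\rho(\g)^{-1}A\,\rho(\g)\bigr)\d{\g}=\operatorname{tr}(A)$, while $\operatorname{tr}(c\,\Id)=c\cdot\dim(V)$. For the rank-one operator $A(u)=\langle u,y\rangle w$ one computes $\operatorname{tr}(A)=\langle w,y\rangle$ by summing $\langle A\,b,b\rangle$ over an orthonormal basis. Thus $c=\langle w,y\rangle/\dim(V)$, and since $\langle w,y\rangle=\overline{\langle y,w\rangle}$, substituting back into $\langle\widetilde{A}\,x,v\rangle=c\,\langle x,v\rangle$ produces exactly $\frac{1}{\dim(V)}\,\langle x,v\rangle\cdot\overline{\langle y,w\rangle}$, as claimed.

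The routine parts are the unitarity manipulation and the rank-one trace computation; the real content, and the step I would be most careful about, is the appeal to Schur's Lemma, which genuinely needs the scalar field to be $\C$ together with irreducibility and finite-dimensionality. I would also want to justify the Haar-measure substitution explicitly, as it is precisely here that compactness of $\G$ (hence bi-invariance of the normalized measure) is used.
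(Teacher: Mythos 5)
Your proof is correct, and it takes a recognizably different route from the paper's, even though both ultimately rest on Schur's Lemma. The paper fixes $y,w$ and applies Schur to the \emph{equivariant sesquilinear form} $B_{y,w}(x,v)=\int_\G\langle\g(x),y\rangle\cdot\overline{\langle\g(v),w\rangle}\d{\g}$, concluding $B_{y,w}(x,v)=\lambda_{y,w}\langle x,v\rangle$; it then needs a second observation, the conjugate symmetry $B_{y,w}(x,v)=\overline{B_{x,v}(y,w)}$, to factor out the $\overline{\langle y,w\rangle}$ dependence, and finally pins down the universal constant by invoking the \emph{orthogonality of characters}, $\int_\G\left\|\hbox{Tr}\left(\rho(\g)\right)\right\|^2\d{\g}=|\G|$, as an external citation. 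You instead apply Schur to the averaged operator $\widetilde{A}=\frac{1}{|\G|}\int_\G\rho(\g)^{-1}A\,\rho(\g)\d{\g}$ built from the rank-one operator $A(u)=\langle u,y\rangle w$, which buys you two things: the dependence on $y,w$ emerges automatically from $\operatorname{tr}(A)=\langle w,y\rangle=\overline{\langle y,w\rangle}$ rather than from a separate symmetry argument, and the normalization comes from conjugation-invariance of the trace, which is elementary and self-contained. That last point is a genuine advantage: in many textbook developments the character orthogonality the paper cites is itself \emph{derived} from these Schur orthogonality relations, so your argument avoids any appearance of circularity, at the cost of introducing the operator-averaging machinery. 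Your closing caveats are also well placed -- Schur's Lemma in the form $\widetilde{A}=c\cdot\Id$ needs the field $\C$, irreducibility, and finite-dimensionality (the latter also being required for $\operatorname{tr}(c\,\Id)=c\cdot\dim(V)$), and the substitution $\g\mapsto\g\tilde\g$ is exactly where bi-invariance of the Haar measure on the compact group enters; the paper leaves the analogous points implicit inside its ``it is not hard to show'' equivariance claim.
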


\begin{corollary}
In particular, letting $\{b^1,\cdots,b^n\}$ be an orthonormal basis for $V$, taking $y=b^i$ and $w=b^j$, and fixing $x=v$, the above statement becomes:
$$\int_\G \widehat{\g(v)}^i\cdot\overline{\widehat{\g(v)}^j} = \frac{|\G|}{\hbox{dim}(V)}\cdot\|v\|^2\cdot\delta_{ij}.$$
That is, the Fourier coefficients of $\g(v)$, thought of as complex-valued functions on $\G$, are orthogonal and the magnitude is independent of which Fourier coefficient we are considering.
\end{corollary}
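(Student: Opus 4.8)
The plan is to deduce Lemma~\ref{l:lemma1} from Schur's Lemma by averaging a single rank-one operator over the group; the specialization to the Corollary is then just a substitution. First I would package the vectors $x$ and $v$ into the rank-one operator $A:V\rightarrow V$ defined by
$$A(u):=\langle u,v\rangle\,x,$$
whose trace is $\operatorname{tr}(A)=\langle x,v\rangle$. The point is that the integral on the left-hand side of the lemma will turn out to be a single matrix coefficient of the group-average of $A$.

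Next I would form the averaged operator
$$T:=\frac{1}{|\G|}\int_\G \rho(\g)\circ A\circ\rho(\g)^{-1}\d{\g}.$$
The key structural step is that $T$ commutes with the representation: for every $h\in\G$, the computation $\rho(h)\circ T\circ\rho(h)^{-1}=\frac{1}{|\G|}\int_\G\rho(h\g)\circ A\circ\rho(h\g)^{-1}\d{\g}$ together with the left-invariance of the Haar measure (substituting $\g\mapsto h\g$) shows $\rho(h)\circ T\circ\rho(h)^{-1}=T$, i.e.\ $T\circ\rho(h)=\rho(h)\circ T$. Since $(\rho,V)$ is irreducible and the scalars are $\C$, Schur's Lemma forces $T=c\cdot\Id$ for a single constant $c$. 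Evaluating $c$ by taking traces and using that the trace is conjugation-invariant gives $c\cdot\dim(V)=\operatorname{tr}(T)=\operatorname{tr}(A)=\langle x,v\rangle$, so $c=\langle x,v\rangle/\dim(V)$.

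It then remains to read off both sides as matrix coefficients of $T$. Expanding the definition and using the unitarity of each $\rho(\g)$ (so that $\langle\rho(\g)^{-1}u,v\rangle=\langle u,\g(v)\rangle$), I obtain
$$T(u)=\frac{1}{|\G|}\int_\G \langle u,\g(v)\rangle\,\g(x)\d{\g}.$$
Setting $u=w$ and pairing with $y$, the identity $\langle w,\g(v)\rangle=\overline{\langle\g(v),w\rangle}$ turns $\langle T(w),y\rangle$ into exactly the left-hand side of the lemma, while $T=c\cdot\Id$ gives $\langle T(w),y\rangle=c\,\langle w,y\rangle=\frac{1}{\dim(V)}\langle x,v\rangle\,\overline{\langle y,w\rangle}$, the right-hand side; equating the two finishes the lemma. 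The Corollary is then immediate: taking $x=v$, $y=b^i$, $w=b^j$ and recognizing $\langle\g(v),b^i\rangle=\widehat{\g(v)}^i$ yields the stated orthogonality with constant $|\G|\,\|v\|^2/\dim(V)$.

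The main obstacle --- and the only place irreducibility enters --- is the appeal to Schur's Lemma; everything else is bookkeeping with unitarity and the invariance of the Haar integral. If a self-contained treatment were wanted, I would prove Schur's Lemma here directly: over $\C$ the commuting operator $T$ has an eigenvalue $\lambda$, its eigenspace $\ker(T-\lambda\,\Id)$ is a nonzero sub-representation, and irreducibility forces it to be all of $V$, so $T=\lambda\,\Id$.
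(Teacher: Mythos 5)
Your proof is correct, and it reaches the Corollary by a genuinely different route from the paper's. Both treat the Corollary itself as an immediate substitution ($x=v$, $y=b^i$, $w=b^j$) into Lemma~\ref{l:lemma1}; the difference lies in how the lemma is established. The paper fixes $y,w$ and applies Schur's Lemma to the $\G$-invariant sesquilinear form $B_{y,w}(x,v)=\int_\G\langle\g(x),y\rangle\cdot\overline{\langle\g(v),w\rangle}\d{\g}$, obtaining $B_{y,w}=\lambda_{y,w}\cdot\langle\cdot,\cdot\rangle$; it then needs a second step, the symmetry $B_{y,w}(x,v)=\overline{B_{x,v}(y,w)}$, to factor out the dependence on $(y,w)$, and a third, the orthogonality of characters ($\int_\G\|\hbox{Tr}(\rho(\g))\|^2\d{\g}=|\G|$), cited externally, to pin down $\lambda=|\G|/\hbox{dim}(V)$. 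You instead use the endomorphism form of Schur's Lemma applied to the group average $T$ of the rank-one operator $A(u)=\langle u,v\rangle\,x$: left-invariance of the Haar measure makes $T$ commute with $\rho$, so $T=c\cdot\Id$, and conjugation-invariance of the trace gives $c=\langle x,v\rangle/\hbox{dim}(V)$ in one line; reading off $\langle T(w),y\rangle$ both from the definition (via unitarity) and from $T=c\cdot\Id$ then delivers the factorization and the constant simultaneously. What each buys: your version is more self-contained, since the constant falls out of an elementary trace computation rather than an appeal to character orthogonality (which is itself a Schur-type orthogonality result, so your route also avoids any appearance of circularity), and it sidesteps the paper's two-stage factorization argument, which is stated somewhat sketchily there. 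Neither approach loses generality: both extend naturally to Lemma~\ref{l:lemma2}, since for non-isomorphic irreducibles Schur forces the equivariant form (paper) or the averaged intertwining operator (your picture) to vanish. Both rely in the same essential way on compactness of $\G$ and finite-dimensionality of $V$.
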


\begin{proof}
Fixing $y,w\in V$, let $B_{y,w}:V\times V\rightarrow\C$ be the map:
$$B_{y,w}(x,v)=\int_\G\langle\g(x),y\rangle\cdot\overline{\langle\g(v),w\rangle}\d{\g}.$$
It is not hard to show that this map is linear in the first argument, conjugate-linear in the second, and $\G$-equivariant. (That is, for any $\g\in\G$ we have $B_{v,w}\left(\g(x),\g(y)\right)=B_{v,w}(x,y)$).
Thus, by Schur's Lemma~\cite{Serre:1977,Fulton:1991}, $B_{y,w}$ is a scalar multiple of the inner-product on $V$:
$$B_{y,w}(x,v)=\lambda_{y,w}\cdot\langle x , v \rangle.$$
Noting that $B_{y,w}(x,v)=\overline{B_{x,v}(y,w)}$, it follows that:
\ignore
{
\newline
{\bf [MK]This should be commented out}
\begin{align*}
\lambda_{y,w}\cdot\langle x , v \rangle &= \overline{\lambda_{x,v}\cdot\langle y , w \rangle }\\
&\Updownarrow\\
\frac{\lambda_{y,w}}{\overline{\langle y , w \rangle}} &= \frac{\overline{\lambda_{x,v}}}{\langle x , v \rangle}\\
\end{align*}
}
$$B_{y,w}(x,v) = \lambda\cdot\langle x , v \rangle\cdot\overline{\langle y , w\rangle},$$
for some constant $\lambda\in{\mathbb C}$ that is independent of $v$ and $w$.

Thus, we are left with the problem of determining $\lambda$. As it is independent of $x$, $y$, $v$, and $w$, it suffices to determine the value $B_{v,v}(v,v)$ for some $v\neq0$.
More generally, letting $\{b^1,\ldots,b^n\}$ be an orthonormal basis we can get an expression for $\lambda$ in terms of the integrated square norm of the trace of $\rho(\g)$:
\begin{eqnarray*}
\int_\G\left\|\hbox{Tr}\left(\rho(\g)\right)\right\|^2\d{\g}
&=& \int_\G\left\|\sum_{i,j=1}^n \langle \g(b^i),b^j\rangle\right\|^2\d{\g}\\
&=& \sum_{i=1}^n B_{b^i,b^j}(b^i,b^j)\\
&=& \hbox{dim}(V)\cdot\lambda.
\end{eqnarray*}
Since the trace is the character of the representation, it follows by the orthogonality of characters~\cite{Serre:1977,Fulton:1991} that $\int_\G\left\|\hbox{Tr}\left(\rho(\g)\right)\right\|^2\d{\g} = |\G|$, giving $\lambda = |\G|/\hbox{dim}(V)$.

Thus, as desired, we get:
$$\frac{1}{|\G|}\int_\G\langle\g(x),y\rangle\cdot\overline{\langle\g(v),w\rangle}\d{\g} = 
\frac{1}{\hbox{dim}(V)}\cdot\langle x , v\rangle\cdot\overline{\langle y , w\rangle}.$$
\end{proof}

\begin{lemma}
\label{l:lemma2}
Leveraging Schur's Lemma in a similar manner, it follows that if $(\rho_1,V_1)$ and $(\rho_2,V_2)$ are two irreducible representations that are not isomorphic, then for any $v_1,w_1\in V_1$ and $v_2,w_2\in V_2$:
$$\int_\G\langle\g(v_1),w_1\rangle\cdot\overline{\langle\g(v_2),w_)\rangle}\d{\g} = 0.$$
\end{lemma}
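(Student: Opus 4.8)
The plan is to mirror the proof of Lemma~\ref{l:lemma1}, but to invoke the version of Schur's Lemma that applies to a pair of \emph{non-isomorphic} irreducibles — namely, that every $\G$-equivariant linear map between them must vanish. First I would fix $w_1\in V_1$ and $w_2\in V_2$ and define a form $B:V_1\times V_2\rightarrow\C$ by
$$B(v_1,v_2)=\int_\G\langle\g(v_1),w_1\rangle\cdot\overline{\langle\g(v_2),w_2\rangle}\d{\g}.$$
Exactly as in Lemma~\ref{l:lemma1}, this form is linear in its first argument and conjugate-linear in its second. The crucial property to establish is that $B$ is $\G$-invariant, i.e. $B(\tilde{\g}(v_1),\tilde{\g}(v_2))=B(v_1,v_2)$ for every $\tilde{\g}\in\G$. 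This follows by writing $\g(\tilde{\g}(v_i))=(\g\circ\tilde{\g})(v_i)$ inside the integral and re-parametrizing the integration variable $\g\mapsto\g\circ\tilde{\g}^{-1}$, using the translation-invariance of the Haar measure on the compact group $\G$.

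Next I would convert $B$ into a linear map between the two spaces. Since $v_2\mapsto B(v_1,v_2)$ is conjugate-linear, the Riesz representation theorem on $V_2$ yields a unique vector $T(v_1)\in V_2$ with $B(v_1,v_2)=\langle T(v_1),v_2\rangle$, and the resulting map $T:V_1\rightarrow V_2$ is linear because $B$ is linear in its first argument. Combining the $\G$-invariance of $B$ with the unitarity of $\rho_2$ then shows that $T$ intertwines the two representations. Concretely,
$$\langle T(\tilde{\g}(v_1)),v_2\rangle=B(\tilde{\g}(v_1),v_2)=B(v_1,\tilde{\g}^{-1}(v_2))=\langle T(v_1),\tilde{\g}^{-1}(v_2)\rangle=\langle\tilde{\g}(T(v_1)),v_2\rangle,$$
where the middle equality uses invariance and the last uses unitarity; since this holds for all $v_2$, we get $T(\tilde{\g}(v_1))=\tilde{\g}(T(v_1))$, that is, $T\circ\rho_1(\tilde{\g})=\rho_2(\tilde{\g})\circ T$. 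Finally, because $(\rho_1,V_1)$ and $(\rho_2,V_2)$ are irreducible and non-isomorphic, Schur's Lemma forces $T=0$, whence $B(v_1,v_2)=\langle T(v_1),v_2\rangle=0$ for all $v_1,v_2$, which is the claim.

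I expect the main subtlety to be conceptual rather than computational: recognizing that the relevant incarnation of Schur's Lemma here is the \emph{vanishing} statement for non-isomorphic irreducibles, in contrast to the scalar-multiple statement exploited in Lemma~\ref{l:lemma1}. The accompanying technical care lies in packaging $B$ as an honest intertwiner $T$ via Riesz and checking the equivariance with the correct handedness, since it is precisely the equivariance of $T$ — together with non-isomorphism — that triggers the conclusion $T=0$.
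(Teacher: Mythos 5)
Your proof is correct and takes essentially the approach the paper intends: the paper gives no written-out argument for this lemma, saying only that it follows by ``leveraging Schur's Lemma in a similar manner'' to Lemma~\ref{l:lemma1}, and your argument is exactly that sketch made rigorous (equivariant sesquilinear form, packaged via Riesz as an intertwiner $T:V_1\rightarrow V_2$, killed by the vanishing form of Schur's Lemma for non-isomorphic irreducibles). No gaps to report.
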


\section{Variance Estimation}
\label{sec:variance_estimation}
Using the above theory, we are now prepared to estimate the variance in Monte Carlo integration. We begin by presenting a general expression for the expected value and variance and then consider the specific cases of the torus, the sphere, and Euclidean space.

\subsection{General Framework}
We assume that we are given a representation $(\rho,V)$ of a compact group $\G$, a homogeneous function $\P:V\rightarrow\R$ (i.e. $\P(v)=\P(\g(v))$ for all $v\in V$ and all $\g\in\G$), and a vector $w\in V$. Our goal is to express the expected value and variance of the dot-product of $w$ with the vectors $\v\in V$:
\begin{align*}
\Mean{\langle w,\v\rangle}{\P} &= \Int{V} \langle w,v\rangle\cdot \P(v)\d{v}\\
\Var {\langle w,\v\rangle}{\P} &= \Mean{\left\|\langle w,\v\rangle\right\|^2}{\P} - \left\|\Mean{\langle w,\v\rangle}{\P}\right\|^2.
\end{align*}

In deriving the closed form expression for the expected value and variance, we will assume that the decomposition of $V$ into irreducible representations $\{V^\lambda\}_{\lambda\in\Lambda}$ contains the trivial representation. (If it does not, we can take the direct sum of $V$ with a one-dimensional space on which $\G$ acts trivially.) We will denote this one-dimensional representation as $V^0$ and let $b_0^1$ be a unit-vector spanning this space (with $\g(b_0^1)=b_0^1$ for all $\g\in\G$).

Finally, for simplicity, we will assume that the irreducible representations occur without multiplicity. That is, if $\lambda\neq\widetilde{\lambda}$ then $V^\lambda$ and $V^{\widetilde{\lambda}}$ are not isomorphic, for all $\lambda,\widetilde{\lambda}\in\Lambda$.

\subsubsection*{The Expected Value}
Using the Fourier coefficients, we can express expected value of the dot-product of $w$ with the vectors $\v\in V$ as:
$$\Mean{\langle w,\v\rangle}{\P} = \Int{V}\sum_{\lambda\in\Lambda}\sum_{m=1}^{n_\lambda} \widehat{w}_\lambda^m\cdot\overline{\widehat{v}_\lambda^m}\cdot\P(v)\d{v}.$$

Using the homogeneity of $\P$, the expected value computed by integrating over $V$ is the same as the expected value computed by integrating over $\g(V)$. In particular, we can express the expected value as the average:
\begin{align*}
\Mean{\langle w,\v\rangle}{\P}
&= \frac{1}{|\G|}\Int{\G}\Int{\g(V)}\sum_{\lambda\in\Lambda}\sum_{m=1}^{n_\lambda} \widehat{w}_\lambda^m\cdot\overline{\widehat{v}_\lambda^m}\cdot\P(v)\d{v}\d{\g}\\
&\minusqquad\minusqquad= \frac{1}{|\G|}\Int{\G}\Int{V}\sum_{\lambda\in\Lambda}\sum_{m=1}^{n_\lambda} \widehat{w}_\lambda^m\cdot\overline{\widehat{\g(v)}_\lambda^m}\cdot\P(\g(v))\d{v}\d{\g}\\
&\minusqquad\minusqquad= \Int{V}\sum_{\lambda\in\Lambda}\sum_{m=1}^{n_\lambda}\widehat{w}_\lambda^m\frac{1}{|\G|}\left(\Int{\G}\overline{\langle \g(v) , b_\lambda^m\rangle}\d{\g}\right)\cdot\P(v)\d{v}.
\end{align*}
Since $b_0^1$ is a unit vector on which $\g$ acts as the identity, we have:
\begin{align*}
\Mean{\langle w,\v\rangle}{\P}
&= \Int{V}\sum_{\lambda\in\Lambda}\sum_{m=1}^{n_\lambda}\widehat{w}_\lambda^m\cdot\\
&\cdot\frac{1}{|\G|}\left(\Int{\G}\overline{\langle\g(v),b_\lambda^m\rangle\cdot\overline{\langle\g(b_0^1),b_0^1\rangle}}\d{\g}\right)\cdot\P(v)\d{v}.
\end{align*}
Using Lemmas~\ref{l:lemma1} and~\ref{l:lemma2}, the facts that $V^0$ is orthogonal and not isomorphic to $V^\lambda$ for all $\lambda\neq0$, and that $\hbox{dim}(V^0)=1$, we have:
\begin{align*}
\Mean{\langle w,\v\rangle}{\P}
&= \sum_{\lambda\in\Lambda}\sum_{m=1}^{n_\lambda}\Int{V}\widehat{w}_\lambda^m\cdot\overline{\langle v,b_0^1\rangle}\cdot\langle b_\lambda^m,b_0^1\rangle\cdot\P(v)\d{v}\\
&= \Int{V}\widehat{w}_0^1\cdot\overline{\widehat{v}_0^1}\cdot\P(v)\d{v}\\
&= \widehat{w}_0^1\cdot\overline{\Int{V}\widehat{v}_0^1\cdot\P(v)\d{v}}.
\end{align*}
That is, the expected value of the dot-product is the trivial Fourier coefficient of $w$ times the complex conjugate of the expected value of the trivial Fourier coefficient of the vectors $\v\in V$:
\begin{equation}
\label{e:expected}
\boxed{\Mean{\langle w,\v\rangle}{\P} = \widehat{w}_0^1\cdot\overline{\Mean{\widehat{\v}_0^1}{\P}}.}
\end{equation}

\subsection*{The Variance}
Using Equation~(\ref{e:expected}), we can express the variance of the dot-product of $w$ with the vectors $\v\in V$ as:
$$\Var{\langle w,\v\rangle}{\P} = \Mean{\left\|\langle w,\v\rangle\right\|^2}{\P}-\left\|\widehat{w}_0^1\right\|^2\cdot\left\|\Mean{\widehat{\v}_0^1}{\P}\right\|^2$$
and we are left with the problem of computing $\Mean{\left\|\langle w,\v\rangle\right\|^2}{\P}$.

Expressing the dot-product in terms of the Fourier coefficients gives:
\begin{align*}
\Mean{\left\|\langle w,\v\rangle\right\|^2}{\P}
&= \Int{V}\left\|\sum_{\lambda\in\Lambda}\sum_{m=1}^{n_\lambda}\widehat{w}_\lambda^m\cdot\overline{\widehat{v}_\lambda^m}\right\|^2\cdot\P(v)\d{v}\\
&\minusqquad\minusqquad= \Int{V}\sum_{\lambda,\widetilde{\lambda}\in\Lambda}\sum_{m=1}^{n_\lambda}\sum_{\widetilde{m}=1}^{n_{\widetilde{\lambda}}}
\widehat{w}_\lambda^m\cdot\overline{\widehat{w}_{\widetilde{\lambda}}^{\widetilde{m}}}\cdot\overline{\widehat{v}_\lambda^m}\cdot\widehat{v}_{\widetilde{\lambda}}^{\widetilde{m}}\cdot\P(v)\d{v}.
\end{align*}

As with the expected value, homogeneity implies that we can average the integrals over all $\gamma(V)$, giving:
\begin{align*}
\Mean{\left\|\langle w,\v\rangle\right\|^2}{\P}
&= \Int{V}\sum_{\lambda,\widetilde{\lambda}\in\Lambda}\sum_{m=1}^{n_\lambda}\sum_{\widetilde{m}=1}^{n_{\widetilde{\lambda}}}
\widehat{w}_\lambda^m\cdot\overline{\widehat{w}_{\widetilde{\lambda}}^{\widetilde{m}}}\cdot\\
&\qquad\cdot
\frac{1}{|\G|}\left(\Int{\G}\overline{\widehat{\g(v)}_\lambda^m}\cdot\widehat{\g(v)}_{\widetilde{\lambda}}^{\widetilde{m}}\d{\g}\right)\cdot\P(v)\d{v}.
\end{align*}
Using the fact that $\widehat{\g(v)}_\lambda^m=\langle\g(v),b_\lambda^m\rangle$ in conjunction with Lemmas~\ref{l:lemma1} and~\ref{l:lemma2} and letting $\pi_\lambda:V\rightarrow V^\lambda$ be the projection from $V$ onto the irreducible representation $V^\lambda$, the summation simplifies to:
\begin{align*}
\Mean{\left\|\langle w,\v\rangle\right\|^2}{\P}
&= \Int{V}\sum_{\lambda\in\Lambda}\sum_{m=1}^{n_\lambda}\left\|\widehat{w}_\lambda^m\right\|^2\cdot\frac{\left\|\pi_\lambda(v)\right\|^2}{\hbox{dim}(V^\lambda)}\cdot\P(v)\d{v}\\
&= \sum_{\lambda\in\Lambda}\frac{\left\|\pi_\lambda(w)\right\|^2\cdot\Mean{\left\|\pi_\lambda(\v)\right\|^2}{\P}}{\hbox{dim}(V^\lambda)}.
\end{align*}
This gives a closed-form expression for the variance as:
\begin{equation}
\boxed{
\label{e:variance}
\Var{\langle w,\v\rangle}{\P}
= \sum_{\lambda\in\Lambda\setminus\{0\}}\frac{\left\|\pi_\lambda(w)\right\|^2\cdot\Mean{\left\|\pi_\lambda(\v)\right\|^2}{\P}}{\hbox{dim}(V^\lambda)}.
}
\end{equation}
Note that by taking the summation over all irreducible representations \emph{except for the trivial one} we subtract off the square-norm of the expected value.

\subsection{The Torus}
In this case, the domain of integration and the group of motions are both the $d$-dimensional torus, $\Omega = \G = [0,2\pi)^d$, and the representation is defined on the space of complex-valued functions on the torus, $V=L^2(\Omega,\C)$, with an element $\g\in\G$ acting on a function by translation:
$$[\g(F)](p):=F(p-\g).$$

The irreducible representations are all one-dimensional (since the group is commutative) and are indexed by points on the $d$-dimensional integer lattice, $\Lambda = \Z^d$. Specifically, the space $V^\lambda$ is spanned by a complex exponential with frequency $\lambda\in\Z^d$:
$$V^\lambda = \hbox{Span}\left\{b_\lambda^0(p)=\frac{e^{i\langle p , \lambda\rangle}}{(2\pi)^{d/2}}\right\}.$$

Thus, Equation~(\ref{e:expected}) gives the expected value of the integral of $F$ as the product of the DC component of $F$ times the complex conjugate of the expected value of the DC component of $\S$. As we are considering Monte Carlo integration, the elements of $\S$ are all the average of $N$ delta-functions, so that:
$$\Mean{\langle F,\S\rangle}{\P}=\Int{\Omega}F(p)\d{p},$$
and the estimate is unbiased.

From Equation~(\ref{e:variance}) the variance in the estimate of the integral can be obtained by taking the power spectrum of $F$, multiplying (frequency-wise) by the expected power spectrum of $\S$, and summing over all non-zero frequencies:
$$\Var{\hbox{MC}(F,\S)}{\P} = \sum_{\lambda\in\Z^d\setminus\{0\}}\|\widehat{F}_\lambda\|^2\cdot\Mean{\|\widehat{\S}_\lambda\|^2}{\P},$$
where $\widehat{F}_l$ is the $l$-th Fourier coefficient of $F$.

\subsection{The Sphere}
In this case, the domain of integration is the 2-sphere, $\Omega=S^2$, the group of motions is the group of rotations in 3D, $\G=SO(3)$, and the representation is defined on the space of complex-valued functions on the sphere, $V=L^2(\Omega,\C)$, with an element $\g\in\G$ acting on a function by rotation:
$$[\g(F)](p):=F\left(\g^{-1}(p)\right).$$
In this case, the irreducible representations are indexed by the non-negative integers, $\Lambda = [0,\cdots,\infty)$, and the irreducible representation $V^\lambda$ is a $(2\lambda+1)$-dimensional space: $$V^\lambda = \hbox{Span}\left\{Y_\lambda^{-\lambda}(\theta,\phi),\cdots,Y_\lambda^\lambda(\theta,\phi)\right\},$$
with $Y_l^m(\theta,\phi)$ the spherical harmonic of frequency $l$ and index $m$.

As with the torus the integrator is unbiased, and the variance can be computed by summing, over each non-zero spherical frequency, the product of the power of $F$ and  the expected power of $\S$ in that frequency, divided by the dimension of the frequency space:
$$\Var{\hbox{MC}(F,\S)}{\P} = \sum_{l=1}^\infty\frac{\displaystyle\sum_{m=-l}^l\|\widehat{F}_l^m\|^2\cdot\sum_{m=-l}^l\Mean{\|\widehat{\S}_l^m\|^2}{\P}}{2l+1},$$
where $\widehat{F}_l^m$ is the $(l,m)$-th spherical harmonic coefficient of $F$.

\subsection{Euclidean Space}
In this case, the domain of integration is $d$-dimensional Euclidean space, $\Omega=\R^d$, the group is the group of Euclidean motions, $\G=SE(d)=\R^d\times SO(d)$, and the representation is defined on the space of complex-valued functions on Euclidean space, $V=L^2(\Omega,\C)$, with an element $\g = (\tau,\sigma)\in\G$ acting on a function by a combination of translation and rotation:
$$[\g(F)](p):=F\left(\sigma^{-1}(p-\tau)\right).$$

Unfortunately, the analysis in Section~\ref{sec:representation_theory} does not apply to this context because we assumed that the group is compact. None-the-less, we can formally carry over the results, replacing the notion of ``dimension'' with the the ``size'' of the irreducible representations.

In this case, the irreducible representations are indexed by the non-negative real numbers, $\Lambda = \R^{\geq 0}$ \cite{Vilenkin:1978} and the space $V^\lambda$ is the ``span'' of complex exponentials whose frequency has norm $\lambda$:
$$V^\lambda = \hbox{Span}_{|q|=\lambda}\left\{b_\lambda^q(p)=e^{i\langle p , q\rangle}\right\}.$$

As above, the integrator is unbiased and, using the fact that the size of the $\lambda$-th irreducible representation is the size of of the $(d-1)$-dimensional sphere with radius $\lambda$, we get:
\begin{align*}
\Var{\hbox{MC}(F,\S)}{\P}
&= \\
&\minusqquad\minusqquad\minusqquad\minusqquad\minusqquad\minusqquad= \int_{0}^\infty\frac{\displaystyle
\Int{|q|=\lambda}\|\widehat{F}_{q}\|^2\d{q}\ \cdot\!\Int{|q|=\lambda}\Mean{\|\widehat{\S}_{q}\|^2}{\P}\d{q}}{\lambda^{d-1}\cdot|S^{d-1}|}\d{\lambda}-\\
&\qquad\qquad-\|\widehat{F}_0\|^2\cdot\Mean{\|\widehat{\S}_0\|^2}{\P},
\end{align*}
where $\widehat{F}_{p}$ is the $p$-th Fourier coefficient of $F$.

\ignore
{
\CommentMK{There is something slightly fishy going on here. We haven't discussed how to scale the sum of delta functions represented by $S$ as the domain no longer has finite measure. If we only scale by $1/N$, then the product of the DC components is exactly what we want, as:
$$\widehat{S}_0^1 = \int_{\R^d}S(p)\d{p}=1\qquad\hbox{and}\qquad\widehat{F}_0^1 = \int_{\R^d}F(p)\d{p}.$$
However, that is not the same as the average of $F$ taken over the $N$ points represented by $S$ -- as the latter should depend on stuff like the support size of $F$ and of $S$, the number of sample points of $S$ that actually fall within the support of $F$, and so on.

I believe that this problem is related to the problem of homogeneous distributions when the group is not compact. I think that it may be possible to address these issues simultaneously by incorporating the measure of the domain within the notion of a distribution. In particular, given a group $\G$ acting transitively on $\Omega$, we can define a homogeneous probability distribution on $V=L^2(\Omega)$ by first defining a homogeneous probability on set of cosets:
$$\widetilde{V} = V / v \sim \g(v)\quad \forall\ \g\in\G.$$
(i.e. $\widetilde{\P}(v)\geq0$ and $\int_{\widetilde{V}}\widetilde{\P}(v)\d{v}=1$.)

Then, a probability distribution $\P:V\rightarrow\R$ can be defined by first defining a probability distribution $\widetilde{\P}:\widetilde{V}\rightarrow\R$ and then setting:
$$\P(v) = \frac{1}{|\G_0|}\widetilde{P}(v)$$
where $\G_0$ is the stabilizer subgroup (of the origin).
}
}

\ignore
{
\section{Future Work}
While this discussion has focused on homogeneity with respect to continuous translations and/or rotations, the same framework can be applied to analyze homogeneity with respect to discrete/finite groups. In future work, we would like use the derived analysis for estimating the variance in stratified sampling by considering homogeneity with respect to the group of translational symmetries of the strata. That is, if we uniformly decompose the $d$-dimensional torus into $M^d$ strata, we can consider homogeneity of sampling patterns with respect to the finite group of integers modulo $M$, raised to the $d$-th power -- $\G = (\Z / M)^d$.
}

\bibliographystyle{acmsiggraph}
\bibliography{TechReport}

\begin{thebibliography}{\protect\citename{Pilleboue et~al\mbox{.} }2015}

\bibitem[\protect\citename{Fulton and Harris }1991]{Fulton:1991}
{\sc Fulton, W., and Harris, J.}
\newblock 1991.
\newblock {\em Representation Theory: A First Course}.
\newblock Springer-Verlag, New York.

\bibitem[\protect\citename{Pilleboue et~al\mbox{.}
  }2015]{Pilleboue:SIGGRAPH:2015}
{\sc Pilleboue, A., Singh, G., Coeurjolly, D., Kazhdan, M., and Ostromoukhov,
  V.}
\newblock 2015.
\newblock Variance analysis for {M}onte {C}arlo integration.
\newblock {\em Transactions on Graphics (SIGGRAPH)\/}.

\bibitem[\protect\citename{Serre }1977]{Serre:1977}
{\sc Serre, J.}
\newblock 1977.
\newblock {\em Linear representations of finite groups}.
\newblock Springer-Verlag, New York.

\bibitem[\protect\citename{Vilenkin }1978]{Vilenkin:1978}
{\sc Vilenkin, N.}
\newblock 1978.
\newblock {\em Special Functions and the Theory of Group Representations}.
\newblock American Mathematical Society.

\end{thebibliography}
\end{document}